\theoremstyle{plain}
\newtheorem*{proposition}{Proposition}
\newtheorem*{lemma}{Lemma}
\theoremstyle{remark}
\newtheorem*{example}{Example}
\newcommand{\pd}[2]{\frac{\partial#1}{\partial#2}} 
\newcommand{\bmSigma}{\mbox{\boldmath$\sigma$}}
\newcommand{\bmx}{{\bf x}}
\newcommand{\bmbeta}{\boldsymbol{\beta}}
\newcommand{\dnorm}{\mathrm{N}}
\newcommand{\eg}{{\it e.g.}}
\begin{document}

\title{Langevin diffusions and the Metropolis-adjusted Langevin algorithm}
\author{T.~Xifara$^{1,2}$\footnote{Correspondence author: xifara@ams.ucsc.edu, Department of Applied Mathematics \& Statistics, University of California, Santa Cruz, CA 95064, USA} \and C.~Sherlock$^1$
\and S.~Livingstone$^3$
\and S.~Byrne$^3$
\and M.~Girolami$^3$
\\
$^1$Department of Mathematics \& Statistics, Lancaster University, UK \\
$^2$Applied Mathematics \& Statistics, University of California, Santa Cruz, USA\\
$^3$Department of Statistical Science, University College London, UK 
}
\date{}
\maketitle

\begin{abstract}
We provide a clarification of the description 
of Langevin diffusions on Riemannian manifolds and of 
 the measure underlying the invariant
density. As a result we propose a new position-dependent 
Metropolis-adjusted Langevin algorithm (MALA)
 based upon a Langevin diffusion in $\mathbb{R}^d$ which has the required invariant density with respect to Lebesgue measure.
We show that our diffusion and the diffusion upon which a previously-proposed
position-dependent MALA is based are equivalent in some cases but are distinct in general.
A simulation study illustrates the gain in efficiency provided by
the new position-dependent MALA.
\end{abstract}

{\bf Keywords}: Diffusions, Markov chain Monte Carlo, Metropolis-adjusted Langevin Algorithm, Riemannian Manifolds


\section{Introduction}
The Metropolis-adjusted Langevin algorithm (MALA)
\cite[\eg][]{Roberts1998} and its manifold variant
(MMALA) \cite[][]{Girolami2011} are Markov chain Monte Carlo methods based on diffusions.  While theoretical properties of the former are better
understood \cite[\eg][]{Roberts1998}, the latter has been shown to be more
effective in practice, producing more efficient estimates for the same
computational budget in many experiments \cite[][]{Girolami2011}.  In
this article we highlight two properties of the diffusion on which
MMALA is based.  First, we point out an unfortunate transcription
error which has propagated through the literature, whereby a factor of
a 1/2 has been missed from one of the terms \cite{Roberts2002,Girolami2011}.  Second, we show that the \emph{corrected}
diffusion does not have the intended invariant density with respect to
Lebesgue measure.  It would seem logical that a similar diffusion
which \emph{does} preserve the intended probability density may prove
a better basis for a Metropolis--Hastings algorithm.  We therefore
describe such a diffusion and the resulting sampling method, which we
call PMALA (position-dependent MALA).  We show that the incorrectly
transcribed diffusion and that on which PMALA is based are equivalent
in some cases, although the former leads to  a more computationally costly
algorithm; this equivalence explains to some extent why the error has been
missed previously.  Finally we describe simulation studies based on those in
\cite{Girolami2011} comparing PMALA and MMALA. In terms of effective
sample size (ESS) PMALA outperforms  MMALA when the two are not
equivalent. PMALA always outperforms MMALA in terms of
effective sample size (ESS) per second, 
since even when the two algorithms are equivalent, each step of 
PMALA involves fewer CPU operations. 

\section{Langevin diffusions}
A $d$-dimensional diffusion is a continuous time stochastic process
$\bm{X} = (\bm{X}_t)_{t \geq 0}$ with almost surely continuous sample
paths.  It can be (formally) written as a solution to a stochastic
differential equation $d\bm{X}_t = \bm{b}(\bm{X}_t)dt +
\bmSigma(\bm{X}_t)d\bm{W}_t$ for drift vector $\bm{b}(\bm{x})$ and volatility matrix
$\bmSigma(\bm{x})$, and where $\bm{W}=(\bm{W}_t)_{t \geq 0}$ is a standard $d$-dimensional Wiener process \cite[\eg][]{Durrett1996}.  
Given an initial condition $\bm{X}_0 = \bm{x}_0$, a realisation can be approximately
simulated using numerical techniques.  The Euler--Maruyama method
\cite[\eg][]{Kloeden1992} is among the simplest: for a chosen
step-size $h$ a realisation the sequence of random variables
$\bm{X}_h,\bm{X}_{2h},...,\bm{X}_{nh}$ is approximated using the
procedure 
 $\bm{x}_{(i+1)h} = \bm{x}_{ih} + h\bm{b}(\bm{x}_{ih}) + \bmSigma(\bm{x}_{ih})\bm{\varepsilon}$,
where $\bm{\varepsilon} \sim \dnorm_d(0,hI_{d \times d})$. 

The law of the diffusion is described by the Fokker--Planck equation
\cite[\eg][]{Oksendal1998}, which relates the evolution of the
probability density function $u(\bm{x},t)$ for $\bm{X}_t$ to the drift and volatility $\bm{b},\bmSigma$,
\begin{equation} \label{eqn:fpk}
\frac{\partial}{\partial t} u(\bm{x},t) = -\sum_i \frac{\partial}{\partial x_i} [b_i(\bm{x})u(\bm{x},t)]+ \frac{1}{2}\sum_{i,j} \frac{\partial^2}{\partial x_i \partial x_j}[V_{ij}(\bm{x})u(\bm{x},t)],
\end{equation}
where $V(\bm{x}) = \bmSigma(\bm{x})\bmSigma(\bm{x})^T$. If
$u(\bm{x},t) = \pi(\bm{x})$ for all $t$, then the process is
\emph{stationary}, and $\pi$ is the density of the \emph{invariant} or
\emph{stationary distribution} of the diffusion,
 meaning that if $\bm{X}_t \sim \pi(\cdot)$ then $\bm{X}_{t+\tau} \sim \pi(\cdot)$ for all $\tau > 0$ \cite[\eg][]{Oksendal1998}.  One such is the Langevin diffusion, the solution to:
\begin{equation} \label{eqn:lang}
d\bm{X}_t = \frac{1}{2}\nabla\log\pi(\bm{X}_t)dt + d\bm{W}_t, \: \bm{X}_0 = \bm{x}_0.
\end{equation}

Setting $\bm{b}$
and $\bmSigma$ as in (\ref{eqn:lang}) and $u(\bm{x},0) = \pi(\bm{x})$
gives $\partial u/\partial t = 0$, meaning that the invariant measure for the Langevin diffusion has associated density $\pi(\bm{x})$ with respect to Lebesgue measure on $\mathbb{R}^d$. Under certain conditions, the Langevin diffusion converges to $\pi$ at an exponential rate from any starting point \cite{Roberts1995}.

The Metropolis--Hastings algorithm simulates from a Markov chain which
has a desired invariant density, $\pi(\bm{x})$.  Expectations from
this distribution can be approximated by averaging values across the
chain \cite[\eg][]{gilks}.  At each iteration some proposal $\bm{x}'$
is drawn from a distribution with density $q(\cdot|\bm{x})$ (where $\bm{x}$ represents the current value in the chain).  The next value in the chain is set to be $\bm{x}'$ with probability $\alpha(\bm{x}'|\bm{x})$, or else $\bm{x}$, where:
\begin{equation} \label{eqn:ar}
\alpha(\bm{x}'|\bm{x}) = 1 \wedge \frac{\pi(\bm{x}')q(\bm{x}|\bm{x}')}{\pi(\bm{x})q(\bm{x}'|\bm{x})}.
\end{equation}
\textit{Any} diffusion can form the basis of a Metropolis--Hastings
algorithm
by setting the proposal density as 
$\bm{X}' \sim \dnorm_d(\bm{x} +h b(\bm{x}), hV(\bm{x}))$.  Since the objective is to simulate a chain with invariant density $\pi$, basing a scheme on a Langevin diffusion (\ref{eqn:lang}) which itself has invariant density $\pi$ seems logical, and indeed the diffusion (\ref{eqn:lang}) is the basis of the MALA method, whereby proposals are generated according to:
\[
\bm{X}' \sim \dnorm_d \left(\bm{x} + \frac{h}{2}\nabla\log\pi(\bm{x}), hI_{d \times d} \right),
\]
for a chosen step size $h$, and then accepted with probability $\alpha(\bm{x}'|\bm{x})$.  Scaling properties of $h$ with $d$ and asymptotic optimal acceptance rates for the method are discussed in \cite{Roberts1998}.  A slight generalisation of (\ref{eqn:lang}) is the diffusion:
\begin{equation} \label{eqn:prelang}
d\bm{X}_t = \frac{1}{2}A\nabla\log\pi(\bm{X}_t)dt + \surd{A}~d\bm{W}_t,
\end{equation}
where $A$ is a positive-definite matrix, and 
$\surd{A}\surd{A}^T = A$.  As with the diffusion (\ref{eqn:lang}), substitution of the drift and volatility terms from (\ref{eqn:prelang}) into the Fokker--Planck equation leads to $\partial u/\partial t=0$, so that $\pi$ is the invariant density of (\ref{eqn:prelang}).  The Metropolis--Hastings scheme derived from (\ref{eqn:prelang}) is known as `pre-conditioned MALA' \cite{Roberts2002} and is well-suited to scenarios in which the components of $\pi$ are highly correlated or have very different marginal variances, but where these relationships vary little over the main posterior mass.

The MMALA algorithm \cite{Girolami2011} is based on the discretisation of a diffusion with a position-dependent volatility matrix:
\begin{align} 
  \label{eqn:MMALAlang}
d\bm{X}_t &= \frac{1}{2}G^{-1}(\bm{X}_t)\nabla\log\pi(\bm{X}_t)dt + \Omega(\bm{X}_t)dt + \surd{G^{-1}(\bm{X}_t)}~d\bm{W}_t, \\
\Omega_i(\bm{X}_t) &= |G(\bm{X}_t)|^{-1/2} \sum_j \pd{}{X_j} \bigl[G^{-1}_{ij}(\bm{X}_t)|G(\bm{X}_t)|^{1/2} \bigr], \nonumber
\end{align}
where $G(\bm{X}_t)$ is some positive definite $d \times d$ matrix.  The choice of $G$ is arbitrary, but some natural candidates arise by noting that the above process can be thought of as a diffusion defined on a Riemannian manifold, specified in local coordinates (\cite{Girolami2011}).  In the resulting algorithm, proposals are generated according to:
\begin{equation} \label{eqn:MMALAprop}
\bm{X}' \sim \dnorm_d \left( \bm{x} + \frac{h}{2}G^{-1}(\bm{x})\nabla\log\pi(\bm{x}) + h\Omega(\bm{x}), h G^{-1}(\bm{x}) \right),
\end{equation}
and then accepted or rejected according to (\ref{eqn:ar}).  A similar
scheme is proposed in \cite{Roberts2002}, based on the same diffusion.
For a suitable choice of $G(\bm{x})$, the position-dependent covariance matrix for proposals in (\ref{eqn:MMALAprop}) allows adaptation to the local curvature of the target density $\pi$, which has been shown to increase algorithm efficiency in a number of examples \cite{Girolami2011}.

\section{A new position-dependent diffusion and MALA}
\label{sec: PMALA}
In general, a diffusion with invariant density $\pi$ can be constructed by starting from (\ref{eqn:fpk}) and selecting a drift and volatility such that 
\begin{equation} \label{eqn:FPKstat}
b_i(\bm{x})\pi(\bm{x}) = \frac{1}{2} \sum_j \pd{}{x_j} \bigl[ V_{ij}(\bm{x})\pi(\bm{x}) \bigr].
\end{equation}
 If the intention is to derive a Metropolis--Hastings proposal mechanism with a position-dependent covariance matrix, a natural starting point would be to simply set $A = A(\bm{X}_t)$ in (\ref{eqn:prelang}), giving
$d\bm{X}_t = \frac{1}{2}A(\bm{X}_t)\nabla\log\pi(\bm{X}_t)dt + \surd{A(\bm{X}_t)}~d\bm{W}_t$,
and this diffusion forms the basis of the simplified MMALA algorithm of \cite{Girolami2011}.  However, substituting the drift and volatility terms into (\ref{eqn:FPKstat}) gives the requirement that:
\begin{equation} \label{eqn:prelang3}
\frac{1}{2}\sum_j A_{ij}(\bm{x})\pd{}{x_j} \bigl[\log\pi(\bm{x}) \bigr]\pi(\bm{x}) = \frac{1}{2}\sum_j \left( \pd{A_{ij}(\bm{x})}{x_j}\pi(\bm{x}) + A_{ij}(\bm{x})\pd{\pi(\bm{x})}{x_j} \right)
\end{equation}
for each $i$.  Since $\partial/\partial
x_j[\log\pi(\bm{x})]\pi(\bm{x}) = \partial \pi(\bm{x})/\partial x_j$,
(\ref{eqn:prelang3}) is only satisfied in general when $A$ is a
constant matrix.  A simple modification to the drift term, however,
leads to a new diffusion which satisfies (\ref{eqn:FPKstat}):
\begin{align} 
  \label{eqn:PMALAlang}
d\bm{X}_t &= \frac{1}{2}A(\bm{X}_t)\nabla\log\pi(\bm{X}_t)dt + \Gamma(\bm{X}_t)dt + \surd{A(\bm{X}_t)}~d\bm{W}_t \\
\Gamma_i(\bm{X}_t) &= \frac{1}{2} \sum_j \pd{}{X_j} A_{ij}(\bm{X}). \nonumber
\end{align}
This diffusion has invariant density $\pi$ with respect
to Lebesgue measure, and the additional drift term $\Gamma$ is
of a simpler form than $\Omega$ in \eqref{eqn:MMALAlang}.  The
resulting Metropolis--Hastings proposal mechanism is: 
\[
\bm{X}' \sim \dnorm_d \left( \bm{x} + \frac{h}{2}A(\bm{x})\nabla\log\pi(\bm{x}) + h\Gamma(\bm{x}), hA(\bm{x}) \right).
\]
We refer to the resulting Metropolis--Hastings method as `position-dependent MALA' or, more succinctly, `PMALA'.

The remainder of this section details two connections between
the diffusions (\ref{eqn:MMALAlang}) and (\ref{eqn:PMALAlang}) when $A(\bm{X}_t)=G^{-1}(\bm{X}_t)$. In
describing these connections the following 
equivalent forms for the $i$th components of $\Omega(\bm{X}_t)$ and
$\Gamma(\bm{X}_t)$ will be helpful. For clarity of exposition 
we suppress
explicit dependence on $X_t$ of all four of these quantities.
\begin{align}
\label{eqn.Omega.alt.one}
\Omega_i&=\sum_j \frac{\partial G^{-1}_{ij}}{\partial X_j}
+\frac{1}{2}\sum_j G^{-1}_{ij}\frac{\partial \log |G|}{\partial X_j}\\
&=
\label{eqn.Omega.alt.two}
-\sum_{jkm}G^{-1}_{ik}\frac{\partial G_{km}}{\partial X_j}G^{-1}_{mj}
+\frac{1}{2}\sum_{jkm}G^{-1}_{ij}\frac{\partial G_{mk}}{\partial
  X_j}G^{-1}_{km}.\\
\nonumber\label{Gamma.alt.onetwo}
\Gamma_i&=
\frac{1}{2}\sum_j \frac{\partial G^{-1}_{ij}}{\partial X_j}
=
-\frac{1}{2}\sum_{jkm}G^{-1}_{ik}\frac{\partial G_{km}}{\partial X_j}G^{-1}_{mj}
\end{align}
The first connection arises because the diffusion (\ref{eqn:MMALAlang}) on
which both MMALA and the algorithm of \cite{Roberts2002} are based contains a
transcription error.  The term $\Omega$ should be multiplied by a
factor of $1/2$, giving the diffusion
\begin{equation} \label{eqn:MMALAlang2}
d\bm{X}_t = \frac{1}{2}G^{-1}(\bm{X}_t)\nabla\log\pi^*(\bm{X}_t)dt + \frac{1}{2}\Omega(\bm{X}_t)dt + \surd{G^{-1}(\bm{X}_t)}~d\bm{W}_t.
\end{equation}
This can be viewed as a deterministic mapping of (\ref{eqn:lang}) onto a
Riemannian manifold with metric tensor $G$, with the first term being the
covariant drift, and the second and third corresponding to a Brownian motion
on the manifold \cite{Kent1978}. However, the density $\pi^*$ is not given
with respect to the Lebesgue measure, but instead respect to the
dimensional volume or Hausdorff measure of the manifold, which is
coordinate invariant. We 
refrain from
discussing this in detail, but note that this is related to the density $\pi$
with respect to the Lebesgue measure via the \emph{area formula} \cite[Theorem
3.2.5]{Federer},
\begin{equation} \label{eqn:areaformula}
  \pi(\bm{x}) = \pi^*(\bm{x}) |G(\bm{x})|^{1/2}.
\end{equation}

\begin{lemma}
  The diffusions defined by (\ref{eqn:MMALAlang2}) and (\ref{eqn:PMALAlang}) are equal.
\end{lemma}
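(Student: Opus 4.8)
The plan is to reduce the claimed identity of the two diffusions to an identity of their drift vectors. Both \eqref{eqn:MMALAlang2} and \eqref{eqn:PMALAlang} (with $A = G^{-1}$) already share the \emph{identical} volatility term $\surd{G^{-1}(\bm{X}_t)}\,d\bm{W}_t$, and two It\^o diffusions with the same drift and the same volatility define the same process. It therefore suffices to show, componentwise, that the drift of \eqref{eqn:MMALAlang2} equals the drift $\tfrac12 G^{-1}\nabla\log\pi + \Gamma$ of \eqref{eqn:PMALAlang}.

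First I would rewrite the $\pi^{*}$-dependent term of \eqref{eqn:MMALAlang2} in terms of the Lebesgue density $\pi$. The area formula \eqref{eqn:areaformula} gives $\pi = \pi^{*}|G|^{1/2}$, hence $\log\pi^{*} = \log\pi - \tfrac12\log|G|$ and, on differentiating,
\[
\pd{\log\pi^{*}}{x_j} = \pd{\log\pi}{x_j} - \frac{1}{2}\pd{\log|G|}{x_j}.
\]
Substituting this into the covariant-drift term $\tfrac12\sum_j G^{-1}_{ij}\,\partial(\log\pi^{*})/\partial x_j$ of \eqref{eqn:MMALAlang2} produces the intended Langevin contribution $\tfrac12\sum_j G^{-1}_{ij}\,\partial(\log\pi)/\partial x_j$ together with a spurious correction $-\tfrac14\sum_j G^{-1}_{ij}\,\partial(\log|G|)/\partial x_j$ in each component $i$.

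Next I would expand the remaining drift term $\tfrac12\Omega_i$ using the first alternative form \eqref{eqn.Omega.alt.one},
\[
\Omega_i = \sum_j \pd{G^{-1}_{ij}}{x_j} + \frac{1}{2}\sum_j G^{-1}_{ij}\pd{\log|G|}{x_j},
\]
so that $\tfrac12\Omega_i$ contributes $+\tfrac14\sum_j G^{-1}_{ij}\,\partial(\log|G|)/\partial x_j$, which exactly cancels the spurious correction from the previous step. What is left of the drift of \eqref{eqn:MMALAlang2} in component $i$ is then $\tfrac12\sum_j G^{-1}_{ij}\,\partial(\log\pi)/\partial x_j + \tfrac12\sum_j \partial G^{-1}_{ij}/\partial x_j$, and by \eqref{Gamma.alt.onetwo} the second sum is precisely $\Gamma_i$. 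This matches the drift of \eqref{eqn:PMALAlang} componentwise and completes the argument.

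The computation is elementary, and the only real care needed is in the bookkeeping of the factors $1/2$ and $1/4$. The decisive points are that the half-weighting of $\Omega$ in the corrected diffusion \eqref{eqn:MMALAlang2} is exactly what makes the $\log|G|$ terms cancel against the area-formula correction (with the uncorrected weighting of \eqref{eqn:MMALAlang} this cancellation would fail), and that invoking the form \eqref{eqn.Omega.alt.one} for $\Omega_i$ rather than \eqref{eqn.Omega.alt.two} cleanly isolates both the $\partial G^{-1}_{ij}/\partial x_j$ piece matching $\Gamma_i$ and the $\log|G|$ piece that is removed. I expect no genuine obstacle beyond this bookkeeping.
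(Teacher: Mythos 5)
Your argument is correct and is essentially the paper's own proof run in the opposite direction: the paper substitutes the area formula \eqref{eqn:areaformula} into \eqref{eqn:PMALAlang} to express its drift in terms of $\pi^*$ and matches it against \eqref{eqn:MMALAlang2} via \eqref{eqn.Omega.alt.one}, whereas you rewrite the drift of \eqref{eqn:MMALAlang2} in terms of $\pi$ and match it against \eqref{eqn:PMALAlang} using the same two identities. The bookkeeping of the $1/2$ and $1/4$ factors checks out, so there is no gap.
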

\begin{proof}
  The volatilities of the two diffusions are the same, so we need only compare
  the drift terms.  Substituting \eqref{eqn:areaformula} into
  (\ref{eqn:PMALAlang}) gives a diffusion where the $i$th component of the
  drift term is
  \[
  b_i=\frac{1}{2}\sum_jG^{-1}_{ij}\frac{\partial \log \pi^*}{\partial X_j}
  +\frac{1}{4}|G|\sum_jG^{-1}_{ij}\frac{\partial |G|}{\partial X_j} +
  \frac{1}{2}\sum_j\frac{\partial G^{-1}_{ij}}{\partial X_j}
  \]
  which, using (\ref{eqn.Omega.alt.one}), is the $i$th component of the drift
  in (\ref{eqn:MMALAlang2}).
\end{proof}

Thus the diffusion (\ref{eqn:MMALAlang}) arises as a result of both an
error in transcription and omitting the determinant factor when
changing reference measures. Interestingly, in certain circumstances these two
mistakes appear to cancel, so that (\ref{eqn:MMALAlang}) does, in
fact, have the correct invariant distribution.

\begin{proposition} If $G(\bm{x})$ is chosen such that for any combination of $1 \leq j,k,m \leq d$:
\begin{equation}
\label{eqn.invariance}
\pd{}{x_j}G_{km}(\bm{x}) = \pd{}{x_k}G_{jm}(\bm{x})
\end{equation}
for all $\bm{x}$, then (\ref{eqn:MMALAlang}) and (\ref{eqn:PMALAlang})
represent the same diffusion. 
\end{proposition}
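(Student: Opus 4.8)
The plan is to reduce the claim to a single algebraic identity, then dispatch that identity by index manipulation. Since $A(\bm{X}_t)=G^{-1}(\bm{X}_t)$, the volatilities $\surd{G^{-1}}$ of (\ref{eqn:MMALAlang}) and (\ref{eqn:PMALAlang}) coincide, and both diffusions are written with respect to the same density $\pi$, so their leading drift terms $\tfrac{1}{2}G^{-1}\nabla\log\pi$ agree as well. Hence the two diffusions are equal if and only if their remaining drift corrections match, i.e.\ if and only if $\Omega_i=\Gamma_i$ for every $i$ under hypothesis (\ref{eqn.invariance}).

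To compare these, I would work from the fully expanded expressions (\ref{eqn.Omega.alt.two}) and (\ref{Gamma.alt.onetwo}). Writing $T_1=\sum_{jkm}G^{-1}_{ik}\tfrac{\partial G_{km}}{\partial X_j}G^{-1}_{mj}$ and $T_2=\sum_{jkm}G^{-1}_{ij}\tfrac{\partial G_{mk}}{\partial X_j}G^{-1}_{km}$, equation (\ref{eqn.Omega.alt.two}) reads $\Omega_i=-T_1+\tfrac{1}{2}T_2$, while (\ref{Gamma.alt.onetwo}) reads $\Gamma_i=-\tfrac{1}{2}T_1$. The desired identity $\Omega_i=\Gamma_i$ is therefore equivalent to the single statement $T_1=T_2$, so everything comes down to proving this equality of two triple sums.

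The key step, and the only place where the hypothesis is used, is to rewrite $T_1$ via the symmetry (\ref{eqn.invariance}): replacing $\tfrac{\partial G_{km}}{\partial X_j}$ by $\tfrac{\partial G_{jm}}{\partial X_k}$ turns $T_1$ into $\sum_{jkm}G^{-1}_{ik}\tfrac{\partial G_{jm}}{\partial X_k}G^{-1}_{mj}$. Relabelling the dummy indices $j\leftrightarrow k$ and then using the symmetry of $G$ (hence of $G^{-1}$) to reorder subscripts, namely $\tfrac{\partial G_{km}}{\partial X_j}=\tfrac{\partial G_{mk}}{\partial X_j}$ and $G^{-1}_{mk}=G^{-1}_{km}$, this becomes exactly $T_2$. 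Carrying out this bookkeeping carefully yields $T_1=T_2$, whence $\Omega_i=\Gamma_i$ and the two diffusions coincide.

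I expect the main obstacle to be purely the index gymnastics: one must apply the derivative-symmetry (\ref{eqn.invariance}) to the correct index pair and then combine it with the automatic symmetry of $G$ in precisely the right order, since a careless relabelling can mask the cancellation. No analytic input beyond these two symmetries is required.
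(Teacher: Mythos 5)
Your proof is correct and follows essentially the same route as the paper's: both reduce the claim to showing $\Omega_i=\Gamma_i$, i.e.\ to the identity $T_1=T_2$ between the two triple sums, and establish it by applying the hypothesis (\ref{eqn.invariance}) together with a $j\leftrightarrow k$ relabelling and the symmetry of $G^{-1}$. The only cosmetic difference is that you transform $T_1$ into $T_2$ whereas the paper rewrites the second term of (\ref{eqn.Omega.alt.two}) to match $T_1$; the manipulations are identical.
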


\begin{proof}
  Since the volatitilites and the multipliers of $\nabla \log \pi$ in the
  drift are identical for the two diffusions, we need only show that
  $\Omega_i=\Gamma_i$ for all $i$. From (\ref{eqn.invariance}) the second term
  in (\ref{eqn.Omega.alt.two}) can be rewritten as
  \[
  \frac{1}{2}\sum_{jkm}G^{-1}_{ij}\frac{\partial G_{jm}}{\partial
    X_k}G^{-1}_{km} = \frac{1}{2}\sum_{jkm}G^{-1}_{ik}\frac{\partial
    G_{km}}{\partial X_j}G^{-1}_{jm},
  \]
  on relabelling $j\leftrightarrow k$. The result follows since
  $G^{-1}_{jm}=G^{-1}_{mj}$.
\end{proof}

This property arises in certain simple cases, which suggests, perhaps,
how this mistake has thus far remained undetected. If the process is
univariate ($d=1$), then \eqref{eqn.invariance} holds trivially. More
generally, it also holds if $G$ is the (continuous) Hessian matrix of some real-valued function: in particular, in the case of a natural exponential family, such as a generalised linear model (GLM) with canonical link, the Fisher information matrix used by \cite{Girolami2011}, is equal to the Hessian of the negative log-likelihood function.

In general, however, the diffusion (\ref{eqn:MMALAlang}) will not have the
desired invariant density.
\begin{example}
  For some positive-valued, differentiable function $f$, set
  \[
  G(\bm{x}) = \begin{bmatrix} f(x_2) & 0 \\ 0 & 1 \end{bmatrix}.
  \]
  It is then straightforward to show that $\Gamma = [0,0]$ and $\Omega = [0,
  f'(x_2) / f(x_2)]$, and hence the diffusions \eqref{eqn:MMALAlang} and
  \eqref{eqn:PMALAlang} have different drift coefficients. Moreover, the
  diffusion \eqref{eqn:MMALAlang} can be written in the same form as that of
  \eqref{eqn:PMALAlang}; by matching the drift terms, it can be seen that
  the invariant density of \eqref{eqn:MMALAlang} is actually proportional to
  $\pi(\bm{x}) f(x_2)$.
\end{example}

\section{Experiments}
We compared the performance of the MALA schemes across three of the
scenarios considered in \cite{Girolami2011}: logistic regression on
each of five different datasets; a stochastic volatility model; and a
non-linear ODE model. As in \cite{Girolami2011} we base the metric
tensor, $G(\bmx)=A(\bmx)^{-1}$, on the expected Fisher information.

Initial tuning runs provided the optimal scaling parameter(s)
($\sqrt{h}$ in this article) in terms of ESS for each algorithm (on each
dataset, where relevant).  The initialisation, burn-in, and length of each
Markov chain was exactly as in \cite{Girolami2011}, however we performed $100$
(rather than $10$) replicated runs for each chain.

Bayesian logistic regression and the non-linear ODE model are of most interest
since in \cite{Girolami2011} MMALA was found to outperform Riemann Manifold
Hamiltonian Monte Carlo for these scenarios. Due to space considerations we
therefore present detailed results for these scenarios; results for the
stochastic volatility model showed the same pattern as for the
non-linear ODE model. Where
especially pertinent we provide brief details on the models themselves and the
priors; for further details the reader is referred to \cite{Girolami2011}.

\subsection{Logistic regression}
\label{sec: bayeslog}
We perform Bayesian inference for a logistic regression model on each
of five different datasets containing between $7$ and $25$ covariates.
We choose a Gaussian prior for the
 parameter vector $\bmbeta \sim \mathcal{N}(\mathbf{0}, \alpha I)$, so that with a design matrix is $X$ and link function $s(\cdot)$
the metric tensor is given by
$G (\bmbeta) = X^T  \Lambda X + \alpha^{-1} I$,  where $\Lambda$ is a diagonal matrix with
elements $\Lambda_{i,i} = s(\bmbeta^T X^T_{i,\cdot})(1-
s(\bmbeta^T X^T_{i,\cdot}))$. As noted above, this satisfies (\ref{eqn.invariance}), so the diffusions on which PMALA and MMALA are based have
the same law and we should expect the ESSs for these two algorithms to
be the same up to Monte Carlo error.

For each Markov chain the ESS was computed for each parameter and the
minimum, median and maximum of these was noted. Table
\ref{tbl:bayes_log} shows, for each algorithm and dataset, the means
and their corresponding standard errors using 
the $100$ replicates. The CPU time and the mean (over replicates)
minimum (over parameters) effective number of independent
samples per second are also provided.

As expected, the ESSs for PMALA and MMALA are very similar. Since $\Gamma$ is
computationally less costly to calculate than $\Omega$, PMALA is quicker and so obtains the larger ESS per second.

\begin{table}[h]
\centering
\caption{\label{tbl:bayes_log} Results for the MMALA schemes for Bayesian
  logistic regression. The mean (over the $100$
  replicates) and its 
  standard error is presented for the minimum,
  median and maximum ESSs (over the parameters). The CPU time and the
  mean minimum ESS per second are also given.}
\fbox{ 
\begin{tabular}{l l c c S S}
\multirow{2}{*}{\it Dataset} & \multirow{2}{*}{\it Method} & \multirow{2}{*}{\it ESS (mean)} & \multirow{2}{*}{\it ESS (s.e)} &\multirow{2}{*}{\it
  CPU Time} & {\it minimum} \\
  & & & && {\it  ESS/s}  \vspace{0.02cm}\\ 
\hline 
Australian & PMALA & (685, 847, 986) & (5.5, 3, 4.1) & 12.58 & 54.5 \\ 
Credit & MMALA & (696, 848, 943) & (6, 2.9, 4.1) & 14.08 & 49.4 \\
\hline 
German & PMALA & (605, 777, 917) & (5.4, 2.5, 4) & 43.8 & 13.8 \\
Credit & MMALA & (605, 774, 921) & (5.5, 2.5, 3.9) & 45.72 & 13.2 \\
\hline
\multirow{2}{*}{Heart} 
& PMALA &  (659, 795, 923) & (5.4, 3.3, 4.3) & 6.57 & 100.3 \\
& MMALA & (657, 773, 920) & (4.8, 2.9, 4.7) & 8.07 & 81.4 \\
\hline
Pima & PMALA & (1235, 1415, 1572) & (8.7, 5.9, 6.6) & 4.67 & 264.5 \\
Indian & MMALA & (1264, 1425, 1576) & (9.6, 6.5, 7.6) & 5.59 & 226.1 \\
\hline
\multirow{2}{*}{Ripley} 
& PMALA & (477, 591, 679) & (6.8, 5.1, 5) & 3.32 & 143.7 \\
& MMALA & (460, 590, 686) & (7.5, 5.2, 5.3) & 3.94 & 116.7 
\end{tabular}
}
\end{table}

\subsection{Non-linear differential equation model}
We now consider the FitzHugh--Nagumo differential equations in 
\cite{Ramsay2007}:
$\dot{W}  = c \left(W - {W^3}/{3}+R \right)$ and 
$\dot{R}  = - (W-a+bT)/{c}.$ 
The simulated dataset and our independent priors for the parameter
vector $(a,b,c)$ and the variance of the Gaussian observation noise are the same as those used in
\cite{Girolami2011}. To be consistent with the appendix of
\cite{Girolami2011} and the associated  Matlab code we assume
$\beta\sim \mbox{Exp}(1)$.

 \begin{table}
 \centering
\caption{\label{tbl: ODE} Results for the MMALA schemes for inference on
  the FitzHugh--Nagumo model. For each parameter (a,b,c) and algorithm the
  mean (over the $100$ replicates) ESS is presented as well as its
  standard error. The CPU time and the mean ESS per second for each
  parameter are also provided.}
\fbox{
\begin{tabular}{l c c c c}
{\it Method} & {\it ESS (mean)} & {\it ESS (s.e)} & {\it CPU Time} &
{\it mean ESS/s}  \vspace{0.05cm}\\ 
\hline 
PMALA & (1639.6, 669.3, 1406.4) & (1.9, 1.2, 1.7) & 896.8 & (1.83, 0.75, 1.57) \\
MMALA & (1274.4, 632.8, 1120.5) & (1.7, 1.2, 1.3) & 923.0 & (1.38, 0.69, 1.21) 
\end{tabular}
}
\end{table}
Table \ref{tbl: ODE} presents the mean ESS for each parameter with its
standard error and shows that PMALA outperforms MMALA using this
measure. CPU time and ESS/sec are also provided in the table; since each
iteration of PMALA is also quicker, its advantage is even clearer when CPU
time is accounted for.

\section*{Acknowledgements}
T. Xifara was part-funded by North West Development Agency project N0003235  and the Greek State Scholarships Foundation.
S. Livingstone is funded by a PhD Scholarship from Xerox Research
Centre Europe.  S. Byrne is funded by an EPSRC Postdoctoral Research
Fellowship, EP/K005723/1.  M. Girolami is funded by an EPSRC
Established Career Research Fellowship, EP/J016934/1 and a Royal
Society Wolfson Research Merit Award and is grateful to Prof. Jesus Sanz Serna for drawing his attention to the underlying Hausdorff measure of the diffusion (5) in a personal communication.

\bibliographystyle{chicago} 
\bibliography{references_MALA}

\end{document}